\documentclass[english,11pt]{article}
\usepackage{amssymb}
\usepackage{amsmath,amsfonts,amssymb,amsthm}
\usepackage{fullpage}
\usepackage{latexsym}
\usepackage{xspace}
\usepackage{babel}
\usepackage{paralist}
\usepackage{algorithm}
\usepackage{algorithmic}

\newtheorem{theorem}{Theorem}

\newtheorem{corollary}{Corollary}


\def \NP{$\mathcal{NP}$}
\def \skdf{signed $k$-dominating function}
\def \stkdf{signed total $k$-dominating function}
\def \sk{\gamma_{kS}}
\def \stk{\gamma^{t}_{kS}}
\def \usk{\Gamma_{kS}}

%



%

\begin{document}

\title{On the Signed (Total) $k$-Domination Number of a Graph\footnotemark[1]}
\author{Hongyu Liang\footnotemark[2]}

\renewcommand{\thefootnote}{\fnsymbol{footnote}}

\footnotetext[1]{This work was supported in part by the National
Basic Research Program of China Grant 2011CBA00300, 2011CBA00301,
and the National Natural Science Foundation of China Grant 61033001,
61061130540, 61073174.}

\footnotetext[2]{Institute for Interdisciplinary Information Sciences,
Tsinghua University, lianghy08@mails.tsinghua.edu.cn}

\date{}
\maketitle

\begin{abstract}
Let $k$ be a positive integer and $G=(V,E)$ be a graph of minimum degree at least $k-1$. A function $f:V\rightarrow \{-1,1\}$ is called a \emph{\skdf}~of $G$ if $\sum_{u\in N_G[v]}f(u)\geq k$ for all $v\in V$. The \emph{signed $k$-domination number} of $G$ is the minimum value of $\sum_{v\in V}f(v)$ taken over all \skdf s~of $G$. The \emph{\stkdf}~and \emph{signed total $k$-domination number} of $G$ can be similarly defined by changing the closed neighborhood $N_G[v]$ to the open neighborhood $N_G(v)$ in the definition. The \emph{upper signed $k$-domination number} is the maximum value of $\sum_{v\in V}f(v)$ taken over all \emph{minimal} \skdf s~of $G$. In this paper, we study these graph parameters from both algorithmic complexity and graph-theoretic perspectives. We prove that for every fixed $k\geq 1$, the problems of computing these three parameters are all \NP-hard. We also present sharp lower bounds on the signed $k$-domination number and signed total $k$-domination number for general graphs in terms of their minimum and maximum degrees, generalizing several known results about signed domination.
\end{abstract}


\section{Introduction}
\label{sec:intro} All graphs considered in this paper are simple and undirected. We generally follow
\cite{gt} for standard notation and terminology in graph theory.
Let $G$ be a graph with vertex set $V(G)$ and edge set $E(G)$. The \emph{order} of $G$ is $|V(G)|$. For each vertex $v\in V(G)$, let $N_G(v)=\{u\in V(G)~|~uv\in E(G)\}$ and $N_G[v]=N_G(v)\cup \{v\}$, which are called the \emph{open neighborhood} and \emph{closed neighborhood} of $v$ (in $G$), respectively. The \emph{degree} of $v$ (in $G$) is $d_G(v)=|N_G(v)|$. The \emph{minimum degree} of $G$ is $\delta(G)=\min_{v\in V(G)}\{d_G(v)\}$, and the \emph{maximum degree} of $G$ is $\Delta(G)=\max_{v\in V(G)}\{d_G(v)\}$.
For an integer $r$, $G$ is called \emph{$r$-regular} if $\Delta(G)=\delta(G)=r$, and is called \emph{nearly $r$-regular} if $\Delta(G)=r$ and $\delta(G)=r-1$. For $S\subseteq V(G)$, $G[S]$ is the subgraph of $G$ \emph{induced} by $S$; that is, $G[S]$ is a graph with vertex set $S$ and edge set $\{uv\in E(G)~|~\{u,v\}\subseteq S\}$. For an integer $n\geq 1$, let $K_n$ denote the complete graph of order $n$; i.e., $K_n$ is an $(n-1)$-regular graph of order $n$.
For any function $f:V(G)\rightarrow \mathbb{R}$, we write $f(S)=\sum_{v\in S}f(v)$ for all $S\subseteq V(G)$, and the \emph{weight} of $f$ is $w(f)=f(V(G))$.

Domination is an important subject in graph theory, and has numerous applications in other fields; see \cite{dom_book2,dom_book} for comprehensive treatment and detailed surveys on (earlier) results in domination theory from both theoretical and applied perspectives.
A set $S\subseteq V(G)$ is called a \emph{dominating set} (resp. \emph{total dominating set}) of $G$ if
$\bigcup_{v\in S}N_G[v]=V(G)$ (resp. $\bigcup_{v\in S}N_G(v)=V(G)$). The \emph{domination number} (resp. \emph{total domination number}) of $G$, denoted
by $\gamma(G)$ (resp. $\gamma_t(G)$), is the minimum size of a dominating set (resp. total dominating set) of $G$.

Let $k\geq 1$ be a fixed integer and $G$ be a graph of minimum degree at least $k-1$. A function $f:V(G)\rightarrow \{-1,1\}$ is called a \emph{\skdf}~of $G$ if $f(N_G[v])\geq k$ for all $v\in V(G)$. The \emph{signed $k$-domination number} of $G$, denoted by $\sk(G)$, is the minimum weight of a \skdf~of $G$. When $G$ is of minimum degree at least $k$, the \emph{\stkdf}~and \emph{signed total $k$-domination number} of $G$ (denoted by $\stk(G)$) can be analogously defined by changing the closed neighborhood $N_G[v]$ to the open neighborhood $N_G(v)$ in the definition.
The concepts of signed $k$-domination number and signed total $k$-domination number are introduced in \cite{signed_k}, where sharp lower bounds of these numbers are established for general graphs, bipartite graphs and $r$-regular graphs in terms of the order of the graphs.
A related graph parameter called the \emph{upper signed $k$-domination number} of $G$, denoted by $\Gamma_{kS}(G)$, is defined in \cite{uppersk} as the maximum weight of a \emph{minimal} signed $k$-dominating function of $G$. (A signed $k$-dominating function $f$ of $G$ is called \emph{minimal} if there exists no signed $k$-dominating function $f'$ of $G$ such that $f'\neq f$ and $f'(v)\leq f(v)$ for every $v\in V(G)$.) This parameter has also been studied in \cite{uppersk_ipl}.

In the special case where $k=1$, the signed $k$-domination number and signed total $k$-domination number are exactly the \emph{signed domination number} \cite{dhhs95} and \emph{signed total domination number} \cite{z01}, respectively. These two parameters have been extensively studied in the literature; see e.g. \cite{signed_cs08,signed_cm96,dhhs95,signed_f96,signed_fm99,signed_hw02,h04,signed_m00,z01,signed_zxll99} and the references therein.

In this paper, we continue the investigation of the signed $k$-domination number and signed total $k$-domination number of graphs, from both algorithmic complexity and graph theoretic points of view. In Section \ref{sec:complexity} we show that, for every fixed $k\geq 1$, the problems of computing the signed $k$-domination number, the signed total $k$-domination number, and the upper signed $k$-domination number of a graph are all \NP-hard. We then present, in Section \ref{sec:sharp_bounds}, sharp lower bounds on the signed $k$-domination number and signed total $k$-domination number for general graphs in terms of their minimum and maximum degrees, from which several interesting results follow immediately.

\section{Complexity Issues of Signed (Total) $k$-Domination}\label{sec:complexity}

In this section we first show the \NP-hardness of computing the signed $k$-domination number and signed total $k$-domination number of a graph for all $k\geq 1$. Since the proofs for the two parameters are very similar, we only detail the proof for the signed total $k$-domination number, and merely point out the changes that need to be made for establishing hardness for the signed $k$-domination number. We now formally define the two decision problems corresponding to the computation of these two graph parameters.
\\

\textsc{Signed $k$-Domination Problem} (S$k$DP)

\textit{Instance:} A graph $G=(V,E)$ and an integer $r$.

\textit{Question:} Is $\sk(G)\leq r$?
\\

\textsc{Signed Total $k$-Domination Problem} (ST$k$DP)

\textit{Instance:} A graph $G=(V,E)$ and an integer $r$.

\textit{Question:} Is $\stk(G)\leq r$?

\begin{theorem}\label{thm:stdn}
For every integer $k\geq 1$, the ST$k$DP problem is \NP-complete.
\end{theorem}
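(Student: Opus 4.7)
The plan is to establish containment in \NP\ and then NP-hardness by polynomial reduction. Membership in \NP\ is immediate: a candidate $f:V\to\{-1,1\}$ can be certified in polynomial time by computing $f(N_G(v))$ at every $v\in V$ and summing $f$ to verify $w(f)\le r$. The content of the theorem therefore lies in NP-hardness.

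For hardness I would reduce from a classical NP-complete problem such as 3-SAT (or equivalently Exact Cover by $3$-Sets), which serves as the standard source for analogous reductions in the $k=1$ case of signed total domination. Given a $3$-CNF formula $\phi$ with variables $x_1,\ldots,x_n$ and clauses $C_1,\ldots,C_m$, the idea is to construct a graph $G_\phi$ out of three kinds of pieces. First, a \emph{variable gadget} for each $x_i$, containing two designated literal vertices whose signs under any \stkdf\ of weight at most $r$ are forced to be opposite, thereby encoding a truth value. Second, a \emph{clause gadget} for each $C_j$ with a central vertex adjacent to the three literal vertices of $C_j$, engineered so that the signed total $k$-domination constraint at this central vertex forces at least one of the three literal neighbors to carry label $+1$. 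Third, auxiliary \emph{forcing gadgets}, for instance small cliques or complete bipartite pieces attached to key vertices, whose roles are (a) to raise the minimum degree of $G_\phi$ to at least $k$, which is required for $\stk(G_\phi)$ to be well-defined, and (b) to pin certain helper vertices to the label $+1$ in any \stkdf\ of weight at most $r$. The bound $r$ is then calibrated to the weight of the canonical \stkdf\ induced by a satisfying assignment, and one verifies both directions: every satisfying assignment yields an \stkdf\ of weight at most $r$, and conversely any \stkdf\ of weight at most $r$ decodes into a satisfying assignment.

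The main obstacle is making the gadgets work uniformly for every fixed $k\ge 1$. For $k=1$ the constraint $f(N(v))\ge 1$ is a familiar majority-type condition for which standard gadgets from the signed total domination literature suffice. For larger $k$ the threshold shifts, so each constraint site has to be \emph{amplified}, for example by adjoining $k-1$ additional neighbors that are pinned to $+1$ via a forcing gadget; the effect is to reduce the $k$-constraint at that site to the already-understood $1$-constraint. The delicate point is to argue that this amplification does not create alternative low-weight labelings that bypass the intended semantics, which typically requires a careful counting argument showing that any deviation from the canonical labeling strictly increases $w(f)$. Finally, as the author notes, the modifications needed to transfer the reduction to the signed $k$-domination problem S$k$DP are minor—essentially switching from open to closed neighborhoods and adjusting a few forcing constants by one—so the same template yields hardness for both parameters.
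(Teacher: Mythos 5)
Your NP-membership argument is fine, but the hardness half of your proposal is a plan rather than a proof: every load-bearing component is asserted instead of constructed. You never exhibit the forcing gadget that ``pins'' a vertex to $+1$, never specify the variable gadget that forces the two literal vertices to take opposite signs, never calibrate $r$, and you explicitly defer the one point on which the whole reduction lives or dies --- showing that no low-weight labeling can bypass the intended semantics (``which typically requires a careful counting argument\ldots''). Until those gadgets are written down and that counting argument is carried out, there is no reduction. Note also that the natural candidate for your forcing gadget is quite specific: a copy of $K_{k+2}$ attached to the rest of the graph by a single edge works because each of its non-attachment vertices has exactly $k+1$ neighbors, so the constraint $f(N(x))\geq k$ leaves no room for any $-1$ inside the clique. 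Without identifying something of this kind, the amplification step you describe (adjoining $k-1$ neighbors ``pinned to $+1$'') is circular, since pinning is exactly what needs a gadget.

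The paper takes a different and considerably simpler route that you may want to compare against: it reduces from \textsc{Minimum Total Dominating Set} rather than 3-SAT. Given $(G,r)$, it attaches $t(v)=d_G(v)+k-2$ pendant copies of $K_{k+2}$ to each vertex $v$. These pendant cliques do double duty: they are the forcing gadget (all their vertices are forced to $+1$ as above), and they calibrate the threshold so that the constraint $f(N_H(v'))\geq k$ at $v'$ becomes exactly ``at least one $G$-neighbor of $v$ is labeled $+1$,'' i.e., the total domination condition. The weight accounting then reduces to $w(f)=2|S|-|V(G)|+T$, giving $\gamma_t(G)\leq r$ iff $\stk(H)\leq 2r-|V(G)|+T$. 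This collapses your three families of gadgets and the satisfiability bookkeeping into one uniform construction that works verbatim for every fixed $k\geq 1$. Your 3-SAT template could probably be made to work, but you would essentially have to reinvent the pendant-clique gadget to do it, and the converse direction (decoding an arbitrary low-weight \stkdf\ into a satisfying assignment) would still require the counting argument you left open.
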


\begin{proof}
Let $k\geq 1$ be a fixed integer. The ST$k$DP problem is clearly in \NP. We now present a polynomial-time reduction from \textsc{Minimum Total Dominating Set} (MTDS), which is a classical \NP-complete problem \cite{book_npc}, to ST$k$DP. The MTDS problem is defined as follows: Given a graph $G$ and an integer $r$, decide whether $G$ has a total dominating set of size at most $r$.

Let $(G,r)$ be an instance of the MTDS problem.
Construct another graph $H$ as follows. First let $H$ contain of a copy of $G$, which is denoted by $G'$. Also, for each vertex $v\in V(G)$, let $v'$ denote its counterpart in $G'$. For each $v\in V(G)$, we add $t(v)$ disjoint copies of $K_{k+2}$ to $H$, where $t(v)=d_{G}(v)+k-2$; call these copies $K^{v,1}_{k+2},K^{v,2}_{k+2},\ldots,K^{v,t(v)}_{k+2}$. Then, for each $i\in\{1,2,\ldots,t(v)\}$, add an edge between $v'$ and an (arbitrary) vertex from $K^{v,i}_{k+2}$. This finishes the construction of $H$. It is easy to verify that $d_H(v')=2d_G(v)+k-2$ for all $v\in V(G)$.

Let $T=(k+2)\sum_{v\in V(G)}t(v)=(k+2)\sum_{v\in V(G)}(k+d_{G}(v)-2)$ be the number of vertices in $V(H\setminus G')$.
We will prove that $\gamma_t(G)\leq r$ if and only if $\stk(H)\leq 2r-|V(G)|+T$.

First consider the ``if'' direction. Assume that $\stk(H)\leq 2r-|V(G)|+T$, and $f:V(H)\rightarrow \{-1,1\}$ is a \stkdf~of $H$ of weight $\stk(H)$.
Let $S'=\{v'\in V(G')~|~f(v')=1\}$.
It is easy to see that, for each $v\in V(G)$ and $1\leq i\leq t(v)$, all vertices in $K^{v,i}_{k+2}$ must have function value ``1'' under $f$.
It follows that $\stk(H)=w(f)=T+|S'|-(|V(G')|-|S'|)=2|S'|-|V(G)|+T$. Since $\stk(H)\leq 2r-|V(G)|+T$, we have $|S'|\leq r$. Now define $S=\{v\in V(G)~|~v'\in S'\}$; i.e., $S$ is the counterpart of $S'$ in $G$. We show that $S$ is a total dominating set of $G$. Assume to the contrary that $S$ is not a total dominating set of $G$, and let $v\in V(G)$ be such that $N_G(v)\cap S=\emptyset$. By our definitions of $S$ and $S'$, $f(u')=-1$ for all $u\in N_G(v)$. Thus, $\sum_{x\in N_{H}(v')}f(x)\leq t(v)-d_G(v)=k-2$, contradicting with the fact that $f$ is a \stkdf~of $H$. Therefore, $S'$ is indeed a total dominating set of $G$, from which $\gamma_t(G)\leq |S'|\leq r$ follows. This completes the proof for the ``if'' direction.

Now comes the ``only if'' part of the reduction. Suppose $\gamma_t(G)\leq r$ and $S\subseteq V(G)$ is a total dominating set of $G$ of size at most $r$. Define a function $f:V(H)\rightarrow \{-1,1\}$ as follows: $f(x)=-1$ if $x=v'$ for some $v\in V(G)\setminus S$, and $f(x)=1$ otherwise. The weight of $f$ is $T+|S|-(|V(G)|-|S|)=2|S|-|V(G)|+T\leq 2r-|V(G)|+T$. We now verify that $f$ is a \stkdf~of $H$. For each $x\in V(H\setminus G')$, $f(N_{H}(x))\geq (k+1)-1=k$. For each $v'\in V(G')$ (with $v\in V(G)$), since $S$ is a total dominating set of $G$, $f(N_{H}(v'))\geq t(v)+1-(d_G(v)-1)=t(v)+2-d_G(v)=k$. Hence, $f$ is a \stkdf~of $H$ of weight at most $2r-|V(G)|+T$. This completes the ``only if'' part of the reduction.

Therefore, $\gamma_t(G)\leq r$ if and only if $\stk(H)\leq 2r-|V(G)|+T$. This finishes the whole reduction, and hence concludes the proof of Theorem~\ref{thm:stdn}.
\end{proof}

\begin{theorem}\label{thm:sdn}
For every integer $k\geq 1$, the S$k$DP problem is \NP-complete.
\end{theorem}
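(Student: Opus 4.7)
The plan is to mimic the proof of Theorem~\ref{thm:stdn} almost verbatim, but reducing from the classical \NP-complete problem \textsc{Minimum Dominating Set} (MDS) rather than MTDS. The problem S$k$DP is clearly in \NP, so only the reduction needs to be described.

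The construction of $H$ from an MDS instance $(G,r)$ should mirror that in Theorem~\ref{thm:stdn}, with two small changes motivated by the shift from open to closed neighborhoods: replace each attached clique $K_{k+2}$ by $K_{k+1}$, and set $t(v):=d_G(v)+k-1$ (rather than $d_G(v)+k-2$). The reason for shrinking the clique is that, for a ``non-connecting'' vertex $y$ of an attached $K_{k+1}$ copy, the set $N_H[y]$ consists of exactly the $k+1$ vertices of that copy, so the constraint $f(N_H[y])\ge k$ forces every copy vertex to have value $+1$ in any \skdf\ of $H$; the constraint at the connecting vertex $x$ of a copy is then automatically satisfied, since $f(N_H[x])\ge (k+1)-1=k$ irrespective of $f(v')$.

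With all copy vertices fixed at $+1$, the constraint at each $v'\in V(G')$ becomes
\[
f(v')+\bigl(2|N_G(v)\cap S'|-d_G(v)\bigr)+t(v)\ \ge\ k,
\]
where $S':=\{u\in V(G):f(u')=1\}$. Substituting the chosen $t(v)=d_G(v)+k-1$, this simplifies to $f(v')+2|N_G(v)\cap S'|\ge 1$, which is violated exactly when $v\notin S'$ and $v$ has no neighbor in $S'$, i.e., exactly when $S'$ fails to be a dominating set of $G$. Writing $T:=(k+1)\sum_{v\in V(G)}t(v)$ for the contribution of the copy vertices, one obtains $w(f)=T+2|S'|-|V(G)|$, so the target equivalence is $\gamma(G)\le r \iff \sk(H)\le 2r-|V(G)|+T$, proved by mimicking both directions of Theorem~\ref{thm:stdn}. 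The only real subtlety is the calibration of the pair $(K_{k+1},\,t(v))$ so that the linear inequality at $v'$ decodes to ordinary (rather than total) domination; once this is pinned down, the rest is bookkeeping identical in form to the total case.
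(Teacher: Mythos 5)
Your proposal is correct and matches the paper's own reduction essentially verbatim: the paper also reduces from \textsc{Minimum Dominating Set}, attaches $s(v)=d_G(v)+k-1$ copies of $K_{k+1}$ to each $v'$, and proves the same equivalence $\gamma(G)\le r \iff \sk(H)\le 2r-|V(G)|+T$ with $T=(k+1)\sum_{v}s(v)$. Your verification of the calibration (forcing all clique vertices to $+1$ and decoding the constraint at $v'$ to closed-neighborhood domination) is sound and in fact spells out details the paper leaves implicit.
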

\begin{proof}
The proof is very similar to that of Theorem~\ref{thm:stdn}, with two differences in the reduction. Therefore, we only describe the reduction.  We reduce from the \NP-complete problem \textsc{Minimum Dominating Set} (which, given a graph $G$ and an integer $r$, needs to decide whether $G$ has a dominating set of size at most $r$) to S$k$DP. Let $(G,r)$ be an instance of \textsc{Minimum Dominating Set}.
Construct another graph $H$ as follows. First let $H$ contain of a copy of $G$, which is denoted by $G'$. For each vertex $v\in V(G)$, add $s(v)$ disjoint copies of $K_{k+1}$ to $H$, where $s(v)=d_{G}(v)+k-1$; call these copies $K^{v,1}_{k+1},K^{v,2}_{k+1},\ldots,K^{v,s(v)}_{k+1}$. Then, for each $i\in\{1,2,\ldots,s(v)\}$, add an edge between $v'$ (the counterpart of $v$ in $G'$) and an arbitrary vertex from $K^{v,i}_{k+1}$. This finishes the construction of $H$. Using similar argument to that in Theorem~\ref{thm:stdn}, we can prove that $\gamma(G)\leq r$ if and only if $\sk(H)\leq 2r-|V(G)|+T$, where $T=(k+1)\sum_{v\in V(G)}s(v)$. The \NP-completeness of S$k$DP is thus established.
\end{proof}

We now define the problem corresponding to the computation of the upper signed $k$-domination number of graphs as follows.
\\

\textsc{Upper Signed $k$-Domination Problem} (US$k$DP)

\textit{Instance:} A graph $G=(V,E)$ and an integer $r$.

\textit{Question:} Is $\usk(G)\geq r$?

\begin{theorem}
For every integer $k\geq 1$, the US$k$DP problem is \NP-complete.
\end{theorem}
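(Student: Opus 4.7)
Membership in \NP\ is immediate once one observes that the minimality of a signed $k$-dominating function $f$ can be verified in polynomial time. Given a candidate $f:V(G)\to\{-1,1\}$, one first checks that $f(N_G[v])\geq k$ for all $v$, and then verifies that for every $v$ with $f(v)=1$, the function $f_v$ obtained by resetting $f(v)$ to $-1$ (and leaving $f$ unchanged elsewhere) fails to be signed $k$-dominating. This single-flip test already suffices: if flipping any individual $+1$-vertex breaks signed $k$-domination at some witness $u\in N_G[v]$, then flipping any superset $S\ni v$ of $+1$-vertices also breaks it, because additional flips within $N_G[u]$ only decrease the already-violated neighborhood sum.

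For the hardness part, the plan is to give a polynomial-time reduction from a suitable \NP-complete problem via a gadget construction in the spirit of Theorems~\ref{thm:stdn} and \ref{thm:sdn}. The natural template is to take a copy $G'$ of the input graph $G$ and attach to each $v'\in V(G')$ a carefully chosen number of pendant cliques of small order (around $K_{k+1}$ or $K_{k+2}$), each linked to $v'$ by a single edge. The gadget sizes should be tuned so that, in any maximum-weight minimal \skdf\ $f$ of the constructed graph $H$: (a) every gadget vertex is forced to value $+1$ and is locally tight, (b) the restriction of $f$ to $V(G')$ encodes an optimal combinatorial object of $G$ (such as a minimum dominating set, a maximum independent set, or a similar structure from a known \NP-hard problem), and (c) the weight of $f$ is an affine function of the size of that object, so that the weight threshold $r$ translates directly into a cardinality threshold on the combinatorial side. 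This gives an equivalence $\usk(H)\geq r'$ iff $G$ admits the target structure.

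The principal obstacle—and the genuine novelty relative to Theorems~\ref{thm:stdn} and \ref{thm:sdn}—is enforcing the minimality condition inside the reduction. For $\sk$ and $\stk$, it was enough to produce a witness function of appropriate weight; here the witness must additionally be minimal, i.e., every $+1$-vertex must be ``pinned'' by some closed neighborhood whose sum is already at the boundary value $k$ or $k+1$. Consequently the gadgets have to be calibrated to simultaneously (i) supply enough tightness witnesses so that the natural function arising from the combinatorial solution is in fact minimal, and (ii) preclude spurious higher-weight minimal functions that do not correspond to any valid combinatorial solution—ruling them out essentially because any ``extra'' $+1$-vertex in $V(G')$ could be flipped to $-1$ without destroying signed $k$-domination. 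Getting the clique sizes, the number of attached copies, and the constraint at each gadget vertex to balance these two opposing demands, while also handling the boundary cases where a vertex of $G$ has degree close to the minimum $k-1$, is where I expect the bulk of the technical work to lie.
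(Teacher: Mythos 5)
Your \NP-membership argument is correct and essentially matches the paper's: your single-flip test is equivalent to the characterization the paper cites (namely, $f$ is minimal iff every $v$ with $f(v)=1$ has some $u\in N_G[v]$ with $f(N_G[u])\in\{k,k+1\}$), and your observation that a single violated flip remains violated under any larger flip set is exactly why the polynomial-time check is sound.

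The hardness part, however, is not a proof but a statement of intent, and the gap you yourself identify --- ``getting the clique sizes, the number of attached copies, and the constraint at each gadget vertex to balance these two opposing demands\dots is where I expect the bulk of the technical work to lie'' --- is precisely the content of the theorem. Beyond being incomplete, the template you propose (attach pendant cliques to a copy of $G$ so that the restriction of a maximum-weight minimal \skdf~to $V(G')$ encodes a \emph{minimum} dominating set) points in a direction that is hard to make work: $\usk$ is a \emph{maximization} over minimal functions, so a reduction must arrange that the adversarially chosen maximum-weight minimal function is forced to reveal the combinatorial structure, and ``every extra $+1$-vertex could be flipped'' does not by itself rule out spurious high-weight minimal functions --- minimality only requires each $+1$-vertex to have \emph{one} tight closed neighborhood, which pendant cliques tend to supply for free. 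The paper sidesteps this by reducing from 1-in-3 SAT instead: variable gadgets are copies of $K_{k+3}$ minus an edge $x'_jx''_j$, where minimality forces at least one $-1$ per gadget (otherwise $x''_j$ has no tight neighbor) and the weight threshold forces exactly one; clause gadgets are copies of $K_{k+2}$ whose distinguished vertex $c'_i$ is joined to $x'_{i_1},x'_{i_2},x'_{i_3}$, where the signed $k$-domination constraint at $c'_i$ forces at least one TRUE literal and minimality (applied to the other clause-block vertices) forces at most one. It is this interplay --- the lower bound coming from the domination inequality and the upper bound coming from minimality --- that encodes the ``exactly one'' semantics of 1-in-3 SAT, and nothing in your outline supplies an analogous mechanism. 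As written, the reduction does not exist yet, so the \NP-hardness claim is unproved.
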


\begin{proof}
The US$k$DP problem is in \NP~because given a function $f: V(G)\rightarrow \{-1,1\}$, we can verify in polynomial time whether $f$ is a minimal signed $k$-dominating function of $G$ using Lemma 4 in \cite{uppersk_ipl}. We will describe a polynomial time reduction from the 1-in-3 SAT problem to it. The 1-in-3 SAT problem is defined as follows: Given a Boolean formula in conjunctive normal form, each clause of which contains exactly three \emph{positive} literals (i.e., variables with no negations), decide whether the formula is \emph{1-in-3 satisfiable}, i.e., if there exists an assignment of the variables such that exactly one variable of each clause is assigned TRUE. This problem is known to be \NP-complete \cite{sat_s78}.

Let $F$ be a Boolean formula with variables $\{x_1,x_2,\ldots,x_n\}$, which is an input of the 1-in-3 SAT problem. Assume $F=\bigwedge_{i=1}^{m}c_i$ where $c_i=(x_{i_1}\lor x_{i_2} \lor x_{i_3})$ for each $i\in \{1,2,\ldots,m\}$.
We construct a graph $G$ as follows. Take $m$ disjoint copies of $K_{k+2}$, each of which corresponds to a clause $c_i$ with $i\in\{1,2,\ldots,m\}$, and $n$ disjoint copies of $K_{k+3}$ (also disjoint from the copies of $K_{k+2}$'s) each of which corresponds to a variable $x_j$ with $j\in\{1,2,\ldots,n\}$. Delete one edge from each copy of $K_{k+3}$. We will call the copy of $K_{k+2}$ corresponding to $c_i$ the \emph{$i$-th clause block}, and call the copy of $K_{k+3}$ (with one edge missing) corresponding to $x_j$ the \emph{$j$-th variable block}. For each $i\in\{1,2,\ldots,m\}$, let $c'_i$ be an (arbitrary) vertex in the $i$-th clause block. For every $j\in\{1,2,\ldots,n\}$, let $x'_j$ and $x''_j$ be the two vertices in the $j$-th variable block for which the edge $x'_jx''_j$ is removed. For each clause $c_i=(x_{i_1}\lor x_{i_2} \lor x_{i_3})$, add three cross-block edges $c'_ix'_{i_1}, c'_ix'_{i_2}$, and $c'_ix'_{i_3}$. This finishes the construction of $G$. Note that $|V(G)|=(k+3)n+(k+2)m$.

We claim that $\usk(G)\geq (k+1)n+(k+2)m$ if and only if $F$ is 1-in-3 satisfiable. First consider the ``if'' direction, and let $\mathcal{A}:\{x_1,x_2,\ldots,x_n\}\rightarrow \{\textrm{TRUE, FALSE}\}$ be an assignment that witnesses the 1-in-3 satisfiability of $F$. Define $f:V(G)\rightarrow \{-1,1\}$ as follows: For each $j\in\{1,2,\ldots,n\}$, let
\begin{eqnarray*}
f(x'_j)=\left\{
\begin{array}{ll}
1 & \textrm{~if~}\mathcal{A}(x_j)=\textrm{TRUE};
\\
-1 & \textrm{~if~}\mathcal{A}(x_j)=\textrm{FALSE}
\end{array}
\right.
\textrm{~~and~~}
f(x''_j)=\left\{
\begin{array}{ll}
-1 & \textrm{~if~}\mathcal{A}(x_j)=\textrm{TRUE};
\\
1 & \textrm{~if~}\mathcal{A}(x_j)=\textrm{FALSE}.
\end{array}
\right.
\end{eqnarray*}
Let $f(v)=1$ for all $v\in V(G)\setminus \bigcup_{j=1}^{n}\{x'_j,x''_j\}$.

Clearly, $w(f)=(k+1)n+(k+2)m$. Since exactly one of $\mathcal{A}(x_{i_1}), \mathcal{A}(x_{i_2})$ and $\mathcal{A}(x_{i_3})$ is TRUE for each $1\leq i\leq m$, it is easy to verify that $f$ is a signed $k$-dominating function of $G$. We next prove that $f$ is minimal, that is, for every vertex $v\in V(G)$ with $f(v)=1$ there exists $u\in N_G[v]$ for which $f(N_G[u])\in\{k,k+1\}$ (see \cite{uppersk_ipl}). For every $j\in\{1,2,\ldots,n\}$, there is (at least) one vertex $u$ in the $j$-th variable block such that $u\not\in\{x'_j,x''_j\}$. This vertex $u$ is adjacent to all other vertices in the $j$-th variable block, and clearly $f(N_G[u])=k+1$. For every $i\in\{1,2,\ldots,m\}$, $c'_i$ is adjacent to all other vertices in the $i$-th clause block, and $f(N_G[c'_i])=(k+2)+(1-2)=k+1$ since exactly one of $f(x'_{i_1}), f(x'_{i_2})$ and $f(x'_{i_3})$ is 1. Therefore, $f$ is indeed a minimal \skdf~of $G$ with weight $(k+1)n+(k+2)m$, and the correctness of the ``if'' direction follows.

We now turn to the ``only if'' part of the claim. Assume that $f$ is a minimal \skdf~of $G$ of weight at least $(k+1)n+(k+2)m$. If for some $j\in\{1,2,\ldots,n\}$, the vertices in the $j$-th variable block all have value 1 under $f$, then $f(N_G[v])\geq k+2$ for every $v\neq x'_j$ in the $j$-th variable block. Thus, there is no $u\in N_G[x''_j]$ such that $f(N_G[u])\in \{k,k+1\}$, which violates the minimality of $f$. Hence, at least one vertex from each variable block must have value $-1$ under $f$, implying that $w(f)\leq (k+1)n+(k+2)m$. We thus have $w(f)=(k+1)n+(k+2)m$, and therefore (1) $f(v)=1$ for every vertex $v$ in the clause blocks, and (2) for each $j\in\{1,2,\ldots,n\}$, $f(v)=-1$ for exactly one vertex $v$ in the $j$-th variable block. Now produce an assignment $\mathcal{A}$ as follows: For each $j\in\{1,2,\ldots,n\}$, let $\mathcal{A}(x_j)=$TRUE if $f(x'_j)=1$, and $\mathcal{A}(x_j)=$FALSE otherwise. For every $i\in\{1,2,\ldots,m\}$, we have $k\leq f(N_G[c'_i])=(k+2)+f(x_{i_1})+f(x_{i_2})+f(x_{i_3})$, and thus at least one of $f(x_{i_1}), f(x_{i_2})$ and $f(x_{i_3})$ must be 1. Assume that at least two of the three values are 1. Then $f(N_G[c'_i])\geq k+3$, and obviously $f(N_G[v])=k+2$ for every other vertex $v$ in the $i$-th clause block. This indicates, however, that a vertex $v\neq c'_i$ in the $i$-th clause block does not have any neighbor (including itself) whose closed-neighborhood-sum is $k$ or $k+1$, contradicting with the minimality of $f$. Accordingly, exactly one of $f(x_{i_1}), f(x_{i_2})$ and $f(x_{i_3})$ is 1, and thus exactly one of $\mathcal{A}(x_{i_1}), \mathcal{A}(x_{i_2})$ and $\mathcal{A}(x_{i_3})$ is TRUE, for every $i\in\{1,2,\ldots,n\}$. Therefore, $F$ is 1-in-3 satisfiable, finishing the proof of the ``only if'' part of the reduction.

The reduction is completed and the \NP-completeness of US$k$DP is thus established.
\end{proof}

\section{Sharp Lower Bounds on $\sk(G)$ and $\stk(G)$}\label{sec:sharp_bounds}

In this section we present sharp lower bounds on $\sk(G)$ and $\stk(G)$ in terms of the minimum and maximum degrees of $G$. Let $k\geq 1$ be a fixed integer throughout this section. For each integer $n$, define $I_{n}=1$ if $n\equiv k \pmod 2$, and $I_{n}=0$ otherwise; that is, $I_{n}$ is the indicator variable of whether $n$ and $k$ have the same parity.

\begin{theorem}\label{thm:lb_sk}
For every graph $G$ with $\delta(G)\geq k-1$,
$$\sk(G) \geq |V(G)|\cdot\frac{\delta(G)-\Delta(G)+2k+I_{\delta(G)}+I_{\Delta(G)}}{\delta(G)+\Delta(G)+2+I_{\delta(G)}-I_{\Delta(G)}}\;.$$
\end{theorem}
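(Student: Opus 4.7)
The plan is to combine the parity refinement of the defining inequality with a standard double-counting argument, and then extremize over the possible vertex degrees.

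First, I would fix an arbitrary \skdf~$f$ of $G$, set $P=\{v\in V(G):f(v)=1\}$ and $M=\{v\in V(G):f(v)=-1\}$, and write $p=|P|$, $m=|M|$, so that $p+m=|V(G)|$ and $w(f)=p-m$. For each $v$, the quantity $f(N_G[v])$ is a sum of $d_G(v)+1$ values from $\{\pm 1\}$, and hence has the same parity as $d_G(v)+1$; together with the bound $f(N_G[v])\geq k$, this gives the strengthened inequality $f(N_G[v])\geq k+I_{d_G(v)}$, since $I_{d_G(v)}=1$ exactly when $k$ and $d_G(v)+1$ have opposite parities.

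Next, I would sum this strengthened inequality over all $v$ and swap the order of summation to get $\sum_{u\in V(G)} f(u)(d_G(u)+1)\geq k|V(G)|+\sum_{v\in V(G)} I_{d_G(v)}$. Grouping vertices by degree, for each $d\in[\delta(G),\Delta(G)]$ let $p_d$ (resp.\ $m_d$) be the number of vertices in $P$ (resp.\ $M$) of degree exactly $d$; the above rearranges to $\sum_{d}\bigl[p_d(d+1-I_d)-m_d(d+1+I_d)\bigr]\geq k|V(G)|$. The main step is then to show that $\max_{d\in[\delta(G),\Delta(G)]}(d+1-I_d)=\Delta(G)+1-I_{\Delta(G)}$ and $\min_{d\in[\delta(G),\Delta(G)]}(d+1+I_d)=\delta(G)+1+I_{\delta(G)}$. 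Because $I_d$ flips with the parity of $d$, each case is a quick check: when $I_{\Delta(G)}=0$ the max is $\Delta(G)+1$ attained at $d=\Delta(G)$, while when $I_{\Delta(G)}=1$ it equals $\Delta(G)$, attained at $d=\Delta(G)$ and (if present) also at $d=\Delta(G)-1$; the min on the $M$-side is handled symmetrically at $\delta(G)$. Using $\sum_d p_d=p$ and $\sum_d m_d=m$, I therefore conclude $p\bigl(\Delta(G)+1-I_{\Delta(G)}\bigr)-m\bigl(\delta(G)+1+I_{\delta(G)}\bigr)\geq k|V(G)|$.

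Finally, substituting $p=(|V(G)|+w(f))/2$ and $m=(|V(G)|-w(f))/2$ and solving the resulting linear inequality for $w(f)$ produces exactly the bound claimed in the theorem. Since $f$ was arbitrary, the same inequality holds for $\sk(G)$. The only genuinely subtle point I expect is the parity extremization in the middle step; it is precisely the $-I_{\Delta(G)}$ and $+I_{\delta(G)}$ corrections coming from that step that sharpen the naive estimate (which would only use $f(N_G[v])\geq k$ together with $d+1\in[\delta(G)+1,\Delta(G)+1]$) to the statement of the theorem.
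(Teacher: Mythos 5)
Your proof is correct, and it follows the same overall strategy as the paper --- the parity strengthening $f(N_G[v])\geq k+I_{d_G(v)}$, summing over all vertices and swapping the order of summation to get $\sum_{u}f(u)(d_G(u)+1)\geq k|V(G)|+\sum_v I_{d_G(v)}$, and then solving a linear inequality in $|P|$ and $|M|$ --- but you execute the crucial middle step quite differently, and more cleanly. The paper partitions each of $P$ and $Q$ into three classes (minimum degree, maximum degree, and intermediate degrees), keeps only the contributions of the extremal-degree vertices to the parity term via $|R|\geq I_{\delta}|V_{\delta}|+I_{\Delta}|V_{\Delta}|$, and then grinds through several lines of algebra involving inequalities such as $(1-I_\Delta)|V_m|\geq(I_\delta-I_\Delta)|Q_m|$ to reach $(\Delta+\delta+2+I_\delta-I_\Delta)|Q|\leq(\Delta+1-k-I_\Delta)|V(G)|$. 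You instead fold the parity bonus into a per-degree coefficient and observe that $\max_{d}(d+1-I_d)=\Delta+1-I_\Delta$ and $\min_{d}(d+1+I_d)=\delta+1+I_\delta$ over $d\in[\delta,\Delta]$; both of your extremization claims check out (the only nontrivial cases being $I_\Delta=1$, where the maximum $\Delta$ is still attained at $d=\Delta$, and $I_\delta=1$, where $d+1+I_d\geq d+1\geq\delta+2$ for $d>\delta$), and the final substitution $p=(n+w)/2$, $m=(n-w)/2$ does reproduce the stated bound. Two side benefits of your version are worth noting: it handles the regular case $\delta=\Delta$ uniformly, whereas the paper dispatches it separately by citing an earlier result (its partition argument needs $V_\delta\cap V_\Delta=\emptyset$, i.e.\ $\Delta>\delta$); and it makes transparent exactly where the corrections $-I_{\Delta}$ and $+I_{\delta}$ in the denominator come from. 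The one point you pass over silently --- that the denominator $\delta+\Delta+2+I_\delta-I_\Delta$ is positive so the division preserves the inequality --- is immediate from $\delta\geq k-1\geq 0$.
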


\begin{proof}
Let $G$ be a graph of order $n$ with $\delta(G)\geq k-1$. For notational simplicity, we write $\delta$ and $\Delta$ to denote $\delta(G)$ and $\Delta(G)$ respectively. When $\delta=\Delta$, it is easy to verify that the theorem degenerates to Theorem 5 in \cite{signed_k}. Thus, we assume in what follows that $\Delta\geq \delta+1$. Let $f$ be a \skdf~of $G$ of weight $\sk(G)$. We need to introduce some notations. Let $P=\{v\in V(G)~|~f(v)=1\}$ and $Q=V(G)\setminus P=\{v\in V(G)~|~f(v)=-1\}$. Furthermore, denote $P_{\delta}=\{v\in P~|~d_G(v)=\delta\}$, $P_{\Delta}=\{v\in P~|~d_G(v)=\Delta\}$, and $P_{m}=P\setminus (P_{\delta}\cup P_{\Delta})$. Define $Q_{\delta}$, $Q_{\Delta}$, and $Q_{m}$ analogously. For each $c\in\{\delta,\Delta,m\}$, let $V_{c}=P_{c}\cup Q_{c}$. Notice that $V_{\delta}\cap V_{\Delta}=\emptyset$ since $\Delta>\delta$. Let $R=\{v\in V(G)~|~d_G(v)\equiv k\pmod 2\}$.
Clearly $\sum_{y\in N_G[x]}f(y)\geq k+1$ for each $x\in R$. Thus, we have
\begin{eqnarray*}\label{equ:1}
kn+|R|&\leq& \sum_{x\in V(G)}\sum_{y\in N_G[x]}f(y)=\sum_{x\in V(G)}(d_G(x)+1)f(x)\\
&=&(\delta+1)|P_{\delta}|+(\Delta+1)|P_{\Delta}|+\sum_{x\in P_{m}}(d_G(x)+1)
-(\delta+1)|Q_{\delta}|-(\Delta+1)|Q_{\Delta}|-\sum_{x\in Q_m}(d_G(x)+1)\\
&\leq&(\delta+1)|P_{\delta}|+(\Delta+1)|P_{\Delta}|+\Delta |P_m|-(\delta+1)|Q_{\delta}|-(\Delta+1)|Q_{\Delta}|-(\delta+2)|Q_m|\\
& &\textrm{(since}~\delta+1\leq d_G(x)\leq \Delta-1~\textrm{for each}~x\in P_m\cup Q_m)\\
&=&(\delta+1)|V_{\delta}|+(\Delta+1)|V_{\Delta}|+\Delta |V_m|-2(\delta+1)|Q_{\delta}|-2(\Delta+1)|Q_{\Delta}|-(\Delta+\delta+2)|Q_m|\\
&=&(\Delta+1)n-(\Delta-\delta)|V_{\delta}|-|V_m|-(\Delta+\delta+2)|Q|+(\Delta-\delta)|Q_{\delta}|-(\Delta-\delta)|Q_{\Delta}|\\
& &\textrm{(note that~}n=|V(G)|=|V_{\delta}|+|V_{\Delta}|+|V_{m}|\textrm{~and~} |Q|=|Q_{\delta}|+|Q_{\Delta}|+|Q_{m}|).
\end{eqnarray*}
Therefore,
\begin{eqnarray*}
(\Delta+1-k)n &\geq& |R|+|V_m|+(\Delta-\delta)(|V_{\delta}|-|Q_{\delta}|+|Q_{\Delta}|)+(\Delta+\delta+2)|Q|\\
&=& |R|+|V_m|+(\Delta-\delta)(|P_{\delta}|+|Q_{\Delta}|)+(\Delta+\delta+2)|Q|.
\end{eqnarray*}

Since $R=\{v\in V(G)~|~d(v)\equiv k\pmod 2\}$, it holds that $V_{\delta}\subseteq R$ if $\delta\equiv k \pmod 2$, and that $V_{\Delta}\subseteq R$ if $\Delta\equiv k \pmod 2$. Recalling that $V_\Delta \cap V_\delta=\emptyset$, we have $|R|\geq I_{\delta}\cdot |V_{\delta}|+I_\Delta\cdot |V_{\Delta}|$. Thus,
\begin{eqnarray*}
(\Delta+1-k)n &\geq& I_{\delta}\cdot |V_{\delta}|+I_\Delta\cdot |V_{\Delta}|+|V_m|+(\Delta-\delta)(|P_\delta|+|Q_{\Delta}|)+(\Delta+\delta+2)|Q|\\
&=&I_{\Delta}(|V_m|+|V_\delta|+|V_\Delta|)+(1-I_{\Delta})|V_m|+(I_\delta-I_\Delta)|V_{\delta}|\\
& &+(\Delta-\delta)(|P_\delta|+|Q_{\Delta}|)+(\Delta+\delta+2)|Q|\\
&=&I_{\Delta}\cdot n+(1-I_{\Delta})|V_m|+(I_\delta-I_\Delta)|V_{\delta}|
+(\Delta-\delta)(|P_\delta|+|Q_{\Delta}|)+(\Delta+\delta+2)|Q|.
\end{eqnarray*}

Observing that $\Delta-\delta\geq 1\geq \max\{I_\delta-I_\Delta,I_\Delta-I_\delta\}$ and
$(1-I_\Delta)|V_m|\geq (1-I_\Delta)|Q_m|\geq (I_\delta-I_\Delta)|Q_m|$,
we get
\begin{eqnarray*}
& &(\Delta+1-k-I_\Delta)n \nonumber \\
&\geq&
(I_\delta-I_{\Delta})|Q_m|+(I_\delta-I_\Delta)|V_{\delta}|
+(I_\Delta-I_\delta)|P_\delta|+(I_\delta-I_\Delta)|Q_{\Delta}|+(\Delta+\delta+2)|Q| \label{equ:4}\\
&=&(I_\delta-I_{\Delta})(|Q_m|+|V_{\delta}|-|P_{\delta}|+|Q_{\Delta}|)+(\Delta+\delta+2)|Q|\nonumber\\
&=&(I_{\delta}-I_{\Delta})(|Q_m|+|Q_\delta|+|Q_{\Delta}|)+(\Delta+\delta+2)|Q| \nonumber \\
&=&(\Delta+\delta+2+I_{\delta}-I_{\Delta})|Q| \nonumber .
\end{eqnarray*}
Hence, we deduce that
\begin{eqnarray*}
|Q|\leq n\cdot \frac{\Delta-k+1-I_\Delta}{\delta+\Delta+2+I_{\delta}-I_{\Delta}}\;,
\end{eqnarray*}
from which it follows that
\begin{eqnarray*}\label{equ:5}
\sk(G)=n-2|Q|\geq n\cdot \frac{\delta-\Delta+2k+I_\delta+I_\Delta}{\delta+\Delta+2+I_{\delta}-I_{\Delta}}\;,
\end{eqnarray*}
which is exactly the desired inequality in Theorem~\ref{thm:lb_sk}.
\end{proof}

A vertex of degree $k-1$ or $k$ in a graph $G$ clearly has function value 1 under all \skdf s~of $G$. Thus, it is natural to consider graphs with minimum degree at least $k+1$ (as is done in \cite{uppersk_ipl} for establishing sharp upper bounds for the upper signed $k$-domination number).
We next show that Theorem~\ref{thm:lb_sk} is sharp for all $\Delta\geq \delta\geq k+1$. This level of sharpness is high as it applies not only to special values of minimum and maximum degrees.

\begin{theorem}\label{thm:sk_sharp}
For any integers $\delta$ and $\Delta$ such that $\Delta\geq \delta\geq k+1$,
there exists an infinite family $\mathcal{F}$ of graphs with minimum degree $\delta$ and maximum degree $\Delta$, such that for every graph $G\in \mathcal{F}$,
$$\sk(G) = |V(G)|\cdot\frac{\delta-\Delta+2k+I_{\delta}+I_{\Delta}}{\delta+\Delta+2+I_{\delta}-I_{\Delta}}\;.$$
\end{theorem}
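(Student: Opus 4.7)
The plan is to construct, for every pair $\Delta \geq \delta \geq k+1$, an infinite family of graphs $G$ whose vertex set splits into a class $V_\delta$ of degree-$\delta$ vertices and a class $V_\Delta$ of degree-$\Delta$ vertices, such that the function $f$ defined by $f|_{V_\delta}\equiv -1$ and $f|_{V_\Delta}\equiv +1$ is a \skdf\ attaining the bound of Theorem~\ref{thm:lb_sk}. Inspecting the inequality chain in that proof, equality is forced by: (i) $V_m=\emptyset$ (no intermediate degrees); (ii) $P_\delta=Q_\Delta=\emptyset$, which is built into the labelling above; and (iii) $f(N_G[v]) = k + I_{d_G(v)}$ at every vertex, i.e.\ as small as parity permits.

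Translating (iii) into arithmetic on neighbourhoods, every $v\in V_\delta$ must have exactly $a:=(\delta+k+1+I_\delta)/2$ neighbours in $V_\Delta$ (and hence $\delta-a$ in $V_\delta$), and every $v\in V_\Delta$ must have exactly $b:=(\Delta+k-1+I_\Delta)/2$ neighbours in $V_\Delta$ (and hence $\Delta-b$ in $V_\delta$). The definitions of $I_\delta,I_\Delta$ together with $\delta\geq k+1$ ensure $a,b\in\Z$ with $0\leq a\leq \delta$ and $0\leq b\leq \Delta$. Balancing the bipartite edges, $a\,|V_\delta| = (\Delta-b)\,|V_\Delta|$, determines the ratio $|V_\delta|:|V_\Delta| = (\Delta+1-k-I_\Delta) : (\delta+1+k+I_\delta)$.

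The construction then is: fix a positive integer $t$, set $|V_\delta|=(\Delta+1-k-I_\Delta)\,t$ and $|V_\Delta|=(\delta+1+k+I_\delta)\,t$, and take $G$ as the edge-disjoint union of (a) a $(\delta-a)$-regular graph on $V_\delta$, (b) a $b$-regular graph on $V_\Delta$, and (c) an $(a,\Delta-b)$-biregular bipartite graph with parts $V_\delta,V_\Delta$. A short computation shows $\delta(G)=\delta$, $\Delta(G)=\Delta$, and $f(N_G[v]) = 2a-\delta-1 = k+I_\delta$ for $v\in V_\delta$ and $f(N_G[v]) = 2b-\Delta+1 = k+I_\Delta$ for $v\in V_\Delta$, so $f$ is a \skdf. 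Its weight is $w(f)=|V_\Delta|-|V_\delta| = t(\delta-\Delta+2k+I_\delta+I_\Delta)$, while $|V(G)| = t(\delta+\Delta+2+I_\delta-I_\Delta)$, so $w(f)/|V(G)|$ equals the target ratio. Combined with Theorem~\ref{thm:lb_sk}, this forces $\sk(G)=w(f)$ and yields the claimed equality.

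The main obstacle is bookkeeping rather than a genuine mathematical barrier: one must check that all three pieces (a)--(c) can be realised simultaneously for infinitely many $t$. The cross-edge count $a\,|V_\delta|=(\Delta-b)\,|V_\Delta|$ is arranged by the choice of ratios; what remains is to ensure that $(\delta-a)\,|V_\delta|$ and $b\,|V_\Delta|$ are even and that the prescribed degrees do not exceed the relevant class sizes minus one. Both conditions are met whenever $t$ lies in a fixed arithmetic progression of sufficiently large integers, producing the infinite family $\mathcal{F}$.
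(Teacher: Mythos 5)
Your proposal is correct and follows essentially the same route as the paper: assign $+1$ to a class of degree-$\Delta$ vertices and $-1$ to a class of degree-$\delta$ vertices, tune the within-class and cross-class degrees so that $f(N_G[v])=k+I_{d_G(v)}$ everywhere, balance the class sizes accordingly, and invoke Theorem~\ref{thm:lb_sk} to force equality. The only difference is that the paper realises the required degree pattern explicitly (disjoint copies of $K_{a,b}$ for the cross edges plus perfect matchings from a $1$-factorization of the complete graph on each class, taking $t$ even), whereas you appeal to the generic existence of regular and biregular graphs subject to parity and size conditions --- a standard fact, so this is an implementation detail rather than a gap.
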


\begin{proof}
Fix integers $\Delta$ and $\delta$ such that $\Delta\geq \delta\geq k+1$. 
%
%
%
Let $H_1,H_2,\ldots,H_{t}$ be $t$ disjoint copies of the complete bipartite graph $K_{a,b}$ with vertex partition $(A,B)$, where $|A|=a=(\delta+k+1+I_\delta)/2$, $|B|=b=(\Delta-k+1-I_\Delta)/2$ (it is easy to verify that $a$ and $b$ are both integers), and $t$ is an arbitrary \emph{even} integer larger than $\Delta$. It is also easy to check that $1\leq a\leq \delta$ and $1\leq b\leq \Delta$ (just note that $I_\delta=0$ when $\delta=k+1$). For each $1\leq i\leq t$, let $A_i$ and $B_i$ denote the vertex partition of $H_i$ with size $a$ and $b$, respectively. Let $P=\bigcup_{i=1}^{t}A_i$ and $Q=\bigcup_{i=1}^{t}B_i$. Note that each vertex in $P$ is connected to exactly $b$ vertices in $Q$, and each vertex in $Q$ is adjacent to exactly $a$ vertices in $P$.

Our desired graph $G$ has vertex set $P\cup Q$, and contains $\bigcup_{i=1}^{t}H_i$ as a subgraph. Furthermore, we add some edges between vertices in $P$ to make $G[P]$ become $(\Delta-b)$-regular (no edges need to be added if $\Delta=b$). This can be done in the following way: Imagine that there is a complete graph $K$ whose vertex set is $P$. Since $|P|=ta$ is even and every complete graph of even order is 1-factorable (see e.g. Theorem 9.1 in \cite{gt_harary}), the edges of $K$ can be partitioned into $|P|-1\geq \Delta$ perfect matchings of $K$. Taking $\Delta-b$ of these matchings and adding them to $G$ certainly makes $G[P]$ become $(\Delta-b)$-regular. Similarly, we add some edges between vertices in $Q$ to make $G[Q]$ $(\delta-a)$-regular. This finishes the construction of $G$. Note that all vertices in $P$ have degree $\Delta$ and those in $Q$ have degree $\delta$, and thus $G$ is of minimum degree $\delta$ and maximum degree $\Delta$. (Note also that by varying $t$, we get an infinite family of graphs with the desired properties.)

Define a function $f:P\cup Q\rightarrow \{-1,1\}$ by letting $f(v)=1$ for all $v\in P$ and $f(u)=-1$ for all $u\in Q$. Then, for each $v\in P$, $f(N_G[v])=\Delta+1-2b=k+I_\Delta\geq k$, and for each $u\in Q$, $f(N_G[u])=2a-(\delta+1)=k+I_\delta\geq k$. Therefore, $f$ is a \skdf~of $G$. Since $|V(G)|=|P|+|Q|$ and $|P|/|Q|=a/b=\frac{\delta+k+1+I_\delta}{\Delta-k+1-I_\Delta}$, we have
\begin{eqnarray*}
\sk(G)\leq w(f)=|P|-|Q|=(1-\frac{2}{|P|/|Q|+1})|V(G)|=|V(G)|\cdot\frac{\delta-\Delta+2k+I_{\delta}+I_{\Delta}}{\delta+\Delta+2+I_{\delta}-I_{\Delta}}\;.
\end{eqnarray*}
By Theorem~\ref{thm:lb_sk}, we know that the equality holds in the above formula, which completes the proof of Theorem~\ref{thm:sk_sharp}.
\end{proof}

We can also derive a sharp lower bound on the signed total $k$-domination number of a graph as follows.
\begin{theorem}\label{thm:lb_stk}
For every graph $G$ with $\delta(G)\geq k$,
$$\stk(G) \geq |V(G)|\cdot\frac{\delta(G)-\Delta(G)+2k+2-I_{\delta(G)}-I_{\Delta(G)}}{\delta(G)+\Delta(G)+I_{\Delta(G)}-I_{\delta(G)}}\;.$$
\end{theorem}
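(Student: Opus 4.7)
The plan is to parallel the proof of Theorem~\ref{thm:lb_sk} with two essential modifications that correspond to the switch from closed to open neighborhoods. First, the double-counting identity becomes
$$\sum_{v \in V(G)} f(N_G(v)) = \sum_{x \in V(G)} d_G(x)\, f(x),$$
i.e., the coefficient $d_G(x)+1$ is replaced by $d_G(x)$. Second, and crucially for the parity bookkeeping, $f(N_G(v))$ has the same parity as $d_G(v)$ (rather than $d_G(v)+1$), so the set of vertices that receive the parity boost $f(N_G(v)) \geq k+1$ is now $R = \{v \in V(G) : d_G(v) \not\equiv k \pmod 2\}$, and the indicator $I_n$ gets \emph{complemented}: $|R| \geq (1-I_\delta)|V_\delta| + (1-I_\Delta)|V_\Delta|$.

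After reducing to $\Delta \geq \delta+1$ (the regular case $\Delta=\delta$ follows from the signed total $k$-domination analogue of Theorem~5 in \cite{signed_k}), I would fix a \stkdf~$f$ of weight $\stk(G)$ and introduce the same partition $P_\delta, P_\Delta, P_m, Q_\delta, Q_\Delta, Q_m$ as in Theorem~\ref{thm:lb_sk}. Bounding $\sum_x d_G(x) f(x)$ from above via $d_G(x) \leq \Delta-1$ for $x \in P_m$ and $d_G(x) \geq \delta+1$ for $x \in Q_m$, and combining with the lower bound $kn + |R|$, the same algebraic telescoping yields
$$(\Delta - k)n \geq |R| + |V_m| + (\Delta-\delta)(|P_\delta| + |Q_\Delta|) + (\Delta+\delta)|Q|.$$
Plugging in the parity lower bound on $|R|$ and using $(1-I_\Delta)|V_\Delta| + (1-I_\delta)|V_\delta| + |V_m| = (1-I_\Delta)n + (I_\Delta - I_\delta)|V_\delta| + I_\Delta|V_m|$ then gives
$$(\Delta - k - 1 + I_\Delta)n \geq (I_\Delta - I_\delta)|V_\delta| + I_\Delta|V_m| + (\Delta-\delta)(|P_\delta| + |Q_\Delta|) + (\Delta+\delta)|Q|.$$

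The step I expect to be the main obstacle is the final absorption: using $\Delta - \delta \geq 1 \geq |I_\Delta - I_\delta|$ together with $I_\Delta |V_m| \geq I_\Delta |Q_m| \geq (I_\Delta - I_\delta)|Q_m|$, and the identities $|V_\delta| - |P_\delta| = |Q_\delta|$ and $|Q| = |Q_\delta| + |Q_\Delta| + |Q_m|$, one collapses all extra terms into $(\Delta + \delta + I_\Delta - I_\delta)|Q|$. The sign checks must be verified for each of the four parity combinations of $\delta$ and $\Delta$; note that the coefficient now reads $I_\Delta - I_\delta$, flipped from the $I_\delta - I_\Delta$ appearing in Theorem~\ref{thm:lb_sk}, precisely because the parity set $R$ has been complemented. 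Concluding with $|Q| \leq n(\Delta - k - 1 + I_\Delta)/(\Delta + \delta + I_\Delta - I_\delta)$ and the identity $\stk(G) = n - 2|Q|$ then produces the claimed inequality.
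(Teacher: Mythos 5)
Your proposal is correct and follows essentially the same route as the paper's own proof in the appendix: the same double-counting identity with coefficient $d_G(x)$, the same complemented parity set $R=\{v: d_G(v)\not\equiv k \pmod 2\}$ with $|R|\geq (1-I_\delta)|V_\delta|+(1-I_\Delta)|V_\Delta|$, and the same absorption of the residual terms into $(\Delta+\delta+I_\Delta-I_\delta)|Q|$ via $|V_\delta|-|P_\delta|=|Q_\delta|$. The final step you flag as the main obstacle goes through exactly as you describe, so there is nothing to add.
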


\begin{theorem}\label{thm:stk_sharp}
For any integers $\delta$ and $\Delta$ such that $\Delta\geq \delta\geq k+2$,
there exists an infinite family $\mathcal{F}$ of graphs with minimum degree $\delta$ and maximum degree $\Delta$, such that for every graph $G\in \mathcal{F}$,
$$\stk(G) = |V(G)|\cdot\frac{\delta-\Delta+2k+2-I_{\delta}-I_{\Delta}}{\delta+\Delta+I_{\Delta}-I_{\delta}}\;.$$
\end{theorem}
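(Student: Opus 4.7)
My plan is to mirror the construction in the proof of Theorem~\ref{thm:sk_sharp}, suitably shifted for the open-neighborhood setting. Switching from $N_G[v]$ to $N_G(v)$ removes the self-contribution $f(v)$ from the defining inequality, so the block sizes of the bipartite gadgets need to be shifted by one in order to make the signed sums land on the correct parity threshold.

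Concretely, I would set $a = \frac{\delta + k + 1 - I_\delta}{2}$ and $b = \frac{\Delta - k - 1 + I_\Delta}{2}$. A case-split on the parities of $(\delta,\Delta)$ relative to $k$ shows that $a$ and $b$ are positive integers with $a \leq \delta$, $b \leq \Delta$, $\delta - a \geq 0$, and $\Delta - b \geq 0$ under the hypothesis $\Delta \geq \delta \geq k+2$. I then pick any even integer $t > \Delta$, take $t$ disjoint copies $H_1,\ldots,H_t$ of $K_{a,b}$ with parts $(A_i,B_i)$, and set $P = \bigcup_i A_i$, $Q = \bigcup_i B_i$. Following the proof of Theorem~\ref{thm:sk_sharp}, I invoke the $1$-factorization of the complete graph on each of $P$ and $Q$ (both of even order) to add edges inside $P$ so that $G[P]$ is $(\Delta - b)$-regular and edges inside $Q$ so that $G[Q]$ is $(\delta - a)$-regular. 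Every vertex of $P$ then has degree $\Delta$ and every vertex of $Q$ has degree $\delta$, so $\delta(G) = \delta$ and $\Delta(G) = \Delta$; letting $t$ range over the even integers larger than $\Delta$ yields an infinite family $\mathcal{F}$.

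Next, define $f \colon V(G) \to \{-1,1\}$ by $f \equiv 1$ on $P$ and $f \equiv -1$ on $Q$. For $v \in P$, exactly $b$ of its $\Delta$ neighbors lie in $Q$, so $f(N_G(v)) = (\Delta - b) - b = \Delta - 2b = k + 1 - I_\Delta \geq k$; analogously, for $u \in Q$, $f(N_G(u)) = 2a - \delta = k + 1 - I_\delta \geq k$. Hence $f$ is a \stkdf~of $G$. Since $|P| = ta$, $|Q| = tb$, and $w(f) = t(a-b)$, a direct simplification of $(a-b)/(a+b)$ gives
\begin{equation*}
\stk(G) \leq w(f) = |V(G)| \cdot \frac{\delta - \Delta + 2k + 2 - I_\delta - I_\Delta}{\delta + \Delta + I_\Delta - I_\delta}.
\end{equation*}
The matching lower bound from Theorem~\ref{thm:lb_stk} then forces equality.

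The main obstacle I anticipate is the parity bookkeeping that pins down the correct choice of $a$ and $b$: these values are forced by the requirement that the open-neighborhood sums $\Delta - 2b$ and $2a - \delta$ attain the smallest value in $\{k, k+1\}$ of the correct parity, and it is precisely this shift---relative to the values $(\delta+k+1+I_\delta)/2$ and $(\Delta-k+1-I_\Delta)/2$ used in the closed-neighborhood case---that makes the ratio $(a-b)/(a+b)$ collapse to the expression in the theorem. Once that calibration is settled, every remaining ingredient (the $K_{a,b}$ blocks, the $1$-factorization-based intra-part regularization, and the weight computation) is a direct transcription of the argument for Theorem~\ref{thm:sk_sharp}.
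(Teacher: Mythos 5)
Your proposal is correct and follows essentially the same route as the paper: the appendix proof of Theorem~\ref{thm:stk_sharp} uses exactly the same construction with $a=(\delta+k+1-I_\delta)/2$ and $b=(\Delta-k-1+I_\Delta)/2$, the same all-ones-on-$P$, all-minus-ones-on-$Q$ function, and the same appeal to Theorem~\ref{thm:lb_stk} for the matching lower bound. Your parity and range checks on $a$ and $b$ are consistent with what the paper asserts.
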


The proofs of Theorems~\ref{thm:lb_stk} and \ref{thm:stk_sharp} are very similar to those of Theorems~\ref{thm:lb_sk} and
\ref{thm:sk_sharp}, and thus are put in the appendix.

Theorems~\ref{thm:lb_sk} and \ref{thm:lb_stk} are generalizations of Theorem 5 in \cite{signed_k}.
The following corollaries, which generalize some other known results regarding signed domination number and signed total domination number, are also immediate from the preceding theorems.

\begin{corollary}
For any nearly $r$-regular graph $G$ of order $n$ with $r\geq k$, $\sk(G)\geq kn/(r+I_{r-1})$ and $\stk(G)\geq kn/(r-I_{r-1})$.
\end{corollary}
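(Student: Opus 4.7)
The plan is to deduce both inequalities directly from Theorems~\ref{thm:lb_sk} and~\ref{thm:lb_stk} by substituting the degree values of a nearly $r$-regular graph and exploiting a parity identity. Specifically, for a nearly $r$-regular graph $G$ of order $n$, we have $\delta(G)=r-1$ and $\Delta(G)=r$. Since $r$ and $r-1$ have opposite parities, exactly one of them is congruent to $k$ modulo $2$, so $I_{r}+I_{r-1}=1$, equivalently $I_{r}=1-I_{r-1}$. I would also quickly verify the domain conditions: $\delta(G)=r-1\geq k-1$ (required for $\sk(G)$) follows from $r\geq k$, and the bound on $\stk(G)$ applies whenever $\stk(G)$ is defined, that is, whenever $r-1\geq k$.

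Next I would plug $\delta=r-1,\Delta=r$ into the bound of Theorem~\ref{thm:lb_sk}. The numerator becomes
\[
(r-1)-r+2k+I_{r-1}+I_{r}=-1+2k+1=2k,
\]
and the denominator becomes
\[
(r-1)+r+2+I_{r-1}-I_{r}=2r+1+I_{r-1}-(1-I_{r-1})=2(r+I_{r-1}).
\]
Hence $\sk(G)\geq n\cdot \frac{2k}{2(r+I_{r-1})}=\frac{kn}{r+I_{r-1}}$, as claimed.

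For the second bound I would perform the analogous substitution in Theorem~\ref{thm:lb_stk}. The numerator simplifies to
\[
(r-1)-r+2k+2-I_{r-1}-I_{r}=-1+2k+2-1=2k,
\]
and the denominator collapses to
\[
(r-1)+r+I_{r}-I_{r-1}=2r-1+(1-I_{r-1})-I_{r-1}=2(r-I_{r-1}),
\]
yielding $\stk(G)\geq \frac{kn}{r-I_{r-1}}$.

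The proof is essentially a direct calculation, so I do not anticipate any real obstacle; the only place where something nontrivial happens is the collapse of the bracketed parity terms via $I_{r}+I_{r-1}=1$, and this is immediate. The main thing to present clearly is this parity identity and the consequent simplification of both the numerator and denominator in the two general bounds.
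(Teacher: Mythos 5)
Your proposal is correct and matches the paper's intent exactly: the paper gives no separate proof, stating the corollary is immediate from Theorems~\ref{thm:lb_sk} and~\ref{thm:lb_stk}, and your substitution $\delta=r-1$, $\Delta=r$ together with the parity identity $I_r+I_{r-1}=1$ is precisely that derivation (your remark that the $\stk$ bound needs $r-1\geq k$ is a fair and correct observation about the hypotheses).
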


\begin{corollary}
Let $c$ be a real number for which $-1< c\leq 1$. Then $\sk(G)\geq cn$ for every graph $G$ of order $n$ with $\delta(G)\geq k-1$ and $\Delta(G)\leq ((1-c)\delta(G)+2k-2c)/(1+c)$, and $\stk(G)\geq cn$ for every graph $G$ of order $n$ with $\delta(G)\geq k$ and $\Delta(G)\leq ((1-c)\delta(G)+2k)/(1+c)$.
\end{corollary}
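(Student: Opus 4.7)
The plan is to derive both inequalities directly from Theorems~\ref{thm:lb_sk} and~\ref{thm:lb_stk} by elementary algebra, since the degree conditions on $\Delta(G)$ in the corollary are essentially what one obtains by solving the inequality ``fraction $\geq c$'' in those theorems for $\Delta$. Writing $\delta=\delta(G)$ and $\Delta=\Delta(G)$, I would first verify positivity of the denominators appearing in the two theorems: $\delta+\Delta+2+I_\delta-I_\Delta\geq 2k-1\geq 1$ and $\delta+\Delta+I_\Delta-I_\delta\geq 2k-1\geq 1$, which follow from $\delta,\Delta\geq k-1$ (resp.\ $\geq k$) together with $I_\delta,I_\Delta\in\{0,1\}$. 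Combined with $1+c>0$ (from $c>-1$), this lets me cross-multiply without flipping inequalities.

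For the $\sk$ part, I would reduce $\sk(G)\geq cn$, via Theorem~\ref{thm:lb_sk}, to the equivalent condition
$$\delta-\Delta+2k+I_\delta+I_\Delta\;\geq\;c\,(\delta+\Delta+2+I_\delta-I_\Delta),$$
which rearranges to $(1+c)\Delta\leq(1-c)\delta+2k-2c+(1-c)I_\delta+(1+c)I_\Delta$. Since $1-c\geq 0$ and $I_\delta,I_\Delta\geq 0$, the two indicator terms on the right are non-negative, so the hypothesis $\Delta\leq((1-c)\delta+2k-2c)/(1+c)$ is already sufficient. The analogous manipulation starting from Theorem~\ref{thm:lb_stk} reduces $\stk(G)\geq cn$ to
$$(1+c)\Delta\;\leq\;(1-c)\delta+2k+2-(1-c)I_\delta-(1+c)I_\Delta;$$
here the indicator terms carry negative signs, so I would bound their total contribution: since $I_\delta,I_\Delta\in\{0,1\}$ and $(1-c)+(1+c)=2$, the inequality $(1-c)I_\delta+(1+c)I_\Delta\leq 2$ always holds, so the hypothesis $\Delta\leq((1-c)\delta+2k)/(1+c)$ suffices.

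The only subtlety, which I would not call a true obstacle, is ensuring that the indicator-dependent correction terms are handled in the correct direction in the two parts: they provide non-negative slack for free in the $\sk$ case, whereas in the $\stk$ case they enter with negative signs and must be absorbed by the $+2$ in the numerator of Theorem~\ref{thm:lb_stk} via the universal bound $(1-c)I_\delta+(1+c)I_\Delta\leq 2$. Everything else is a routine substitution into the preceding lower bounds.
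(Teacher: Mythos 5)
Your proposal is correct and is essentially the argument the paper intends: the corollary is stated as an immediate consequence of Theorems~\ref{thm:lb_sk} and~\ref{thm:lb_stk}, and your algebra (positive denominators, cross-multiplication, dropping the non-negative indicator slack in the $\sk$ case and absorbing $(1-c)I_\delta+(1+c)I_\Delta\leq 2$ into the $+2$ in the $\stk$ case) is exactly the routine verification being left to the reader.
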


\begin{corollary}
Let $G$ be a graph with $\delta(G)\geq k$ and $\Delta(G)\leq \delta(G)+2k$. Then $\sk(G)\geq 0$ and $\stk(G)\geq 0$.
\end{corollary}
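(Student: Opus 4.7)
The plan is to deduce both inequalities directly from the sharp lower bounds established in Theorems~\ref{thm:lb_sk} and~\ref{thm:lb_stk}. Since each bound is a multiple of $|V(G)|$ by an explicit rational function of $\delta,\Delta,I_\delta,I_\Delta$, the whole task reduces to checking that the numerators of those fractions are non-negative and that the denominators are positive, under the hypotheses $\delta(G)\geq k$ and $\Delta(G)\leq\delta(G)+2k$.

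First I would dispatch the denominators. Writing $\delta=\delta(G)$ and $\Delta=\Delta(G)$, one has $\delta+\Delta+2+I_\delta-I_\Delta\geq\delta+\Delta+1>0$ and $\delta+\Delta+I_\Delta-I_\delta\geq\delta+\Delta-1\geq 2k-1>0$ (using $I_\delta-I_\Delta,\,I_\Delta-I_\delta\in\{-1,0,1\}$ and $\delta\geq k\geq 1$). So the sign of each lower bound agrees with the sign of its numerator, and it suffices to prove that both numerators are non-negative.

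For $\sk(G)$ this is immediate: the numerator $\delta-\Delta+2k+I_\delta+I_\Delta$ is already non-negative because the hypothesis $\Delta\leq\delta+2k$ gives $\delta-\Delta+2k\geq 0$, and $I_\delta+I_\Delta\geq 0$ only helps. For $\stk(G)$ the numerator is $\delta-\Delta+2k+2-I_\delta-I_\Delta$, and here I would do a brief split on $(I_\delta,I_\Delta)\in\{0,1\}^2$: if $I_\delta+I_\Delta\leq 1$ the numerator is bounded below by $(\delta-\Delta+2k)+1\geq 1$, while in the extremal case $I_\delta=I_\Delta=1$ it collapses exactly to $\delta-\Delta+2k$, which is non-negative by hypothesis.

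The only point that requires any thought at all is this last case split in the signed total bound: the $+2$ summand in the numerator precisely compensates for the worst contribution of the parity indicators $-I_\delta-I_\Delta$, so no additional argument or strengthening of the hypothesis is needed. Everything else is a one-line substitution into Theorems~\ref{thm:lb_sk} and~\ref{thm:lb_stk}.
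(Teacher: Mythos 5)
Your proof is correct and is exactly the argument the paper has in mind: the paper gives no explicit proof, stating only that the corollary is ``immediate from the preceding theorems,'' and your substitution into Theorems~\ref{thm:lb_sk} and~\ref{thm:lb_stk} with the sign checks on numerators and denominators is precisely that intended one-line deduction. The only detail worth keeping is the observation you already make for $\stk(G)$, namely that the $+2$ in the numerator absorbs the worst case $I_\delta=I_\Delta=1$.
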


\bibliographystyle{plain}
\bibliography{signed_k}

\appendix
\section{Proof of Theorem~\ref{thm:lb_stk}}\label{apx:pf}
\begin{proof}[Proof of Theorem~\ref{thm:lb_stk}]
Let $G$ be a graph of order $n$ and $f$ be a \stkdf~of $G$. Let $\delta,\Delta,P,Q,P_\delta,P_\Delta,P_m,Q_\delta,Q_\Delta,Q_m,V_\delta,V_\Delta,V_m$ be defined in the same way as in the proof of Theorem~\ref{thm:lb_sk}. Let $R=\{v\in V(G)~|~d(v)\not\equiv k\pmod 2\}$ (which is different from the definition of $R$ in the proof of Theorem~\ref{thm:lb_sk}). Assume $\Delta>\delta$, otherwise the theorem just becomes Theorem 5 in \cite{signed_k}. Since $\sum_{y\in N_G(x)}f(y)\geq k+1$ for all $x\in R$, we have:
\begin{eqnarray*}
kn+|R|&\leq& \sum_{x\in V(G)}\sum_{y\in N_G(x)}f(y)\\
&=&\sum_{x\in V(G)}d_G(x)f(x)\\
&=&\delta|P_{\delta}|+\Delta|P_{\Delta}|+\sum_{x\in P_{m}}d_G(x)
-\delta|Q_{\delta}|-\Delta|Q_{\Delta}|-\sum_{x\in Q_m}d_G(x)\\
&\leq&\delta|P_{\delta}|+\Delta|P_{\Delta}|+(\Delta-1) |P_m|-\delta|Q_{\delta}|-\Delta|Q_{\Delta}|-(\delta+1)|Q_m|\\
&=&\delta|V_{\delta}|+\Delta|V_{\Delta}|+(\Delta-1) |V_m|-2\delta|Q_{\delta}|-2\Delta|Q_{\Delta}|-(\Delta+\delta)|Q_m|\\
&=&\Delta n-(\Delta-\delta)|V_{\delta}|-|V_m|-(\Delta+\delta)|Q|+(\Delta-\delta)|Q_{\delta}|-(\Delta-\delta)|Q_{\Delta}|\\
& &\textrm{(recall that~}n=|V(G)|=|V_{\delta}|+|V_{\Delta}|+|V_{m}|\textrm{~and~} |Q|=|Q_{\delta}|+|Q_{\Delta}|+|Q_{m}|).
\end{eqnarray*}
By our definition, it holds that $|R|\geq (1-I_\delta)|V_\delta|+(1-I_\Delta)|V_\Delta|$. Therefore,
\begin{eqnarray*}
(\Delta-k)n &\geq& |R|+|V_m|+(\Delta-\delta)(|V_{\delta}|-|Q_{\delta}|+|Q_{\Delta}|)+(\Delta+\delta)|Q|\\
&=& |R|+|V_m|+(\Delta-\delta)(|P_{\delta}|+|Q_{\Delta}|)+(\Delta+\delta)|Q|\\
&\geq& (1-I_{\delta})|V_{\delta}|+(1-I_\Delta)|V_{\Delta}|+|V_m|+(\Delta-\delta)(|P_\delta|+|Q_{\Delta}|)+(\Delta+\delta)|Q|\\
&=&(1-I_{\Delta})(|V_m|+|V_\delta|+|V_\Delta|)+I_{\Delta}|V_m|+(I_\Delta-I_\delta)|V_{\delta}|\\
& &+(\Delta-\delta)(|P_\delta|+|Q_{\Delta}|)+(\Delta+\delta)|Q|\\
&=&(1-I_{\Delta})n+I_{\Delta}|V_m|+(I_\Delta-I_\delta)|V_{\delta}|
+(\Delta-\delta)(|P_\delta|+|Q_{\Delta}|)+(\Delta+\delta)|Q|.
\end{eqnarray*}
Noting that $I_{\Delta}|V_m|\geq (I_\Delta-I_\delta)|Q_m|$ and $\Delta-\delta\geq \max\{I_\Delta-I_\delta,I_\delta-I_\Delta\}$, we obtain
\begin{eqnarray*}
& &(\Delta-k+I_\Delta-1)n \\
&\geq&
I_{\Delta}|V_m|+(I_\Delta-I_\delta)|V_{\delta}|
+(\Delta-\delta)(|P_\delta|+|Q_{\Delta}|)+(\Delta+\delta)|Q|\\
&\geq&
(I_\Delta-I_{\delta})|Q_m|+(I_\Delta-I_\delta)|V_{\delta}|
+(I_\delta-I_\Delta)|P_\delta|+(I_\Delta-I_\delta)|Q_{\Delta}|+(\Delta+\delta)|Q|\\
&=&(I_\Delta-I_{\delta})(|Q_m|+|V_\delta|-|P_\delta|+|Q_\Delta|)+(\Delta+\delta)|Q|\\
&=&(I_{\Delta}-I_{\delta})|Q|+(\Delta+\delta)|Q|\\
&=&(\Delta+\delta+I_{\Delta}-I_{\delta})|Q|.
\end{eqnarray*}
Hence, we have
\begin{eqnarray*}
|Q|\leq n\cdot \frac{\Delta-k+I_\Delta-1}{\delta+\Delta+I_{\Delta}-I_{\delta}}\;,
\end{eqnarray*}
from which it follows that
\begin{eqnarray*}\label{equ:5}
\sk(G)=n-2|Q|\geq n\cdot \frac{\delta-\Delta+2k+2-I_\delta-I_\Delta}{\delta+\Delta+I_{\Delta}-I_{\delta}}\;,
\end{eqnarray*}
completing the proof of Theorem~\ref{thm:lb_stk}.
\end{proof}

\section{Proof of Theorem~\ref{thm:stk_sharp}}\label{apx:pf2}
\begin{proof}[Proof of Theorem~\ref{thm:stk_sharp}]
Fix integers $\Delta$ and $\delta$ such that $\Delta\geq \delta\geq k+2$. 
%
%
%
We proceed with the same construction used in the proof of Theorem~\ref{thm:sk_sharp}, except for setting $a=(\delta+k-I_\delta+1)/2$ and $b=(\Delta-k+I_\Delta-1)/2$ instead. (It is easy to check that $a$ and $b$ are integers satisfying that $1\leq a\leq \delta$ and $1\leq b\leq \Delta$.) The obtained graph $G$ has vertex set $P\cup Q$, where $d_G(v)=\Delta$ for all $v\in P$ and $d_G(u)=\delta$ for all $u\in Q$. Furthermore, each vertex $v\in P$ is adjacent to exactly $b$ vertices in $Q$ and $\Delta-b$ vertices in $P$, while every vertex $u\in Q$ is adjacent to precisely $a$ vertices in $P$ and $\delta-a$ vertices in $Q$. Now define a function $f$ which assigns 1 to all vertices in $P$ and $-1$ to those in $Q$. It is easy to verify that $f$ is a \stkdf~of $G$ with weight $|V(G)|\cdot \frac{\delta-\Delta+2k+2-I_\delta-I_\Delta}{\delta+\Delta+I_{\Delta}-I_{\delta}}$, completing the proof of Theorem~\ref{thm:stk_sharp}.
\end{proof}

\end{document}